\newtheorem{theorem}{Theorem}
\newtheorem{lemma}{Lemma}
\newcommand*\samethanks[1][\value{footnote}]{\footnotemark[#1]}
\newcommand{\Ind}{ \mathds{1} }
\newcommand{\G}{ \mathcal{G}}
\newcommand\distrib{ {\sim} }	
\newcommand{\N}{\mathbb{N}}
\newcommand\lintersect{ {\cap} }
\newcommand\lunion{ {\cup}}
\newcommand{\bs}[1]{\boldsymbol{#1}}
\DeclareMathAlphabet\mathbfcal{OMS}{cmsy}{b}{n}
\newcommand{\Graph}[1]{\mathit{\mathbf{\bs{#1}}}}
\newcommand{\Set}[1]{\mathit{#1}}
\newcommand{\pCoDO}{p_{CoDO}}
\newcommand{\E}{\mathbb{E}}
\newcommand{\Hypergeometric}{\mathrm{Hypergeometric}}
\newcommand{\var}[1]{#1}
\newcommand{\rvar}[1]{\uppercase{#1}}
\begin{document}

\title{Combining Density and Overlap (CoDO): A New Method for Assessing the Significance of Overlap Among Subgraphs}

%

\author{
	Abram Magner\thanks{Corresponding authors: anmagner@illinois.edu}   \thanks{Authors contributed equally.} \\
    Coordinated Science Lab, UIUC \\
    Champaign, IL 61820 \\
    \texttt{anmagner@illinois.edu}
    \And
    Shahin Mohammadi \samethanks \\
    Dept. of Computer Science, Purdue University \\
    West Lafayette, IN 47907 \\
    \texttt{mohammadi@purdue.edu}
    \And
    Ananth Grama \\
    Dept. of Computer Science, Purdue University \\
    West Lafayette, IN 47907 \\
    \texttt{ayg@cs.purdue.edu}
}

\maketitle
\begin{abstract}
Algorithms for detecting clusters (including overlapping clusters) in graphs have received significant attention in the research community. A closely related important aspect of the problem -- quantification of statistical significance of overlap of clusters, remains relatively unexplored. This paper presents the first theoretical and practical results on quantifying statistically significant interactions between clusters in networks. Such problems commonly arise in diverse applications, ranging from
social network analysis to systems biology. The paper addresses the problem of quantifying the statistical significance of the observed overlap of the two clusters in an Erd\H{o}s-R\'enyi graph model. The analytical framework presented in the paper assigns a $p$-value to overlapping subgraphs by combining information about both the sizes of the subgraphs and their edge densities in comparison to the corresponding values for their overlapping component. This $p$-value is demonstrated to have excellent discrimination properties in real applications and is shown to be robust across broad parameter ranges.

Our results are comprehensively validated on synthetic, social, and biological networks. We show that our framework: (i) derives insight from both the density and the size of overlap among communities (circles/pathways), (ii) consistently outperforms state-of-the-art methods over all tested datasets, and (iii) when compared to other measures, has much broader application scope. In the context of social networks, we identify highly interdependent (social) circles, and show that our predictions are highly co-enriched with known user features. In networks of biomolecular interactions, we show that our method identifies novel cross-talk between pathways, sheds light on their mechanisms of interaction, and provides new opportinities for investigations of biomolecular interactions.
\end{abstract}

%
%
%

\section{Introduction}

Quantification of statistical significance of an observed artifact in data is a
critical aspect of data analytics applications, particularly at scale. The
statistical significance of an artifact is often quantified in terms of its
$p$-value -- the probability that a random event caused the observed artifact.
Clearly, the lower this probability, the higher the statistical significance.
Statistical significance, in terms of its $p$-value is defined with respect to
a prior, which models the background (random) distribution. Selecting a prior
that is distant from the data renders all observations significant (i.e., they
have very low $p$ values). Conversely, selecting a prior that is identical to
the data renders all observations insignificant (i.e., their $p$ values approach
1). A suitable combination of a statistical prior along with an appropriate
$p$ value formulation provides discrimination of significant artifacts from
those that are not.

$p$-values of observed artifacts may be estimated using analytic or
sampling techniques. Sampling techniques for artifacts in graphs often
require large numbers of samples for convergence. Analytic techniques,
on the other hand, pose significant challenges for derivation of tight
bounds, particularly for complex graph models. Analytic techniques have
been successfully demonstrated in the context of characterizing the
significance of dense subgraphs~\cite{koyuturk2007}, distributions of subgraph
diameters, and frequencies of network motifs such as triangles. Sampling based
techniques have been demonstrated in conjunction with non-parametric graph
priors to characterize the alignment of networks~\cite{scvshs}.

In this paper, we focus on the important and challenging problem of
characterizing the statistical significance, in terms of its $p$-value,
of the observed overlap between two dense subgraphs. This problem arises
in a number of applications -- for instance: (i) in understanding when
an overlap between two communities in a social network suggests a
social tie; (ii) in assessing whether overlap among two groups of
like-minded movie viewers suggests a shared interest; and (iii) in
determining whether crosstalk between two sets of interacting
biomolecules corresponds to a functional property.

The problem of analytically characterizing a suitable $p$-value for the overlap
of two subgraphs, even for simple graph models such as Erd\H{o}s-R\'enyi, is a
hard one. One can use simpler abstractions to characterize observed
overlap using Hypergeometric Tails (HGT) or by characterizing the significance
of the overlapping subgraph independently (i.e., ignoring the existence of
the two sub-graphs). However, these simpler abstractions lack the
discriminating power necessary for most applications. The Hypergeometric Tail does not consider network structure, therefore its applicability is limited. Characterizing the significance of the overlapping subgraph does not consider cluster sizes, relying, instead on high density within the overlapping subgraph. Our method, on the other hand, allows for cases in which densities of each component subgraph and their overlap are not significant by themselves with respect to the ambient graph, but the edges inside the two subgraphs are unusually concentrated inside the overlap.  This aspect of our method gives it excellent robustness across wide parameter ranges, as well as diverse applications. We present a detailed theoretical foundation for quantifying the $p$-value
of the overlap of two subgraphs in an Erd\H{o}s-R\'enyi graph. Our formulation
considers the sizes of the two sub-graphs, their densities, as well as
the size and density of the overlap set. In doing so, it provides excellent
discrimination across the parameter space. 

We comprehensively validate the suitability of our selected prior and our framework for quantifying statistical significance of overlaps on synthetic as well as real networks. Using an Erd\H{o}s-R\'enyi random graph with implanted clusters of different size, density, and overlap, we experimentally evaluate the behavior of our framework with respect to different parameters. Having established the baseline behavior, we apply our method to the ego networks for Facebook, Google+, and Twitter networks~\cite{NIPS2012}. We identify social circles with significant overlap and show that these circles are highly enriched in common features. Finally, using a well curated network of		
interacting human proteins and a given set of functional pathways, we
identify statistically significant crosstalk between these pathways. We
show that the crosstalk identified correlates very well with known
biological insights, while also providing a number of novel observations for
future investigation. We also demonstrate that our measure significantly
outperforms other characterizations of statistical significance.

\section{Background and Related Work}
Evaluating the significance of overlap among a pair of given subgraphs is 
a challenging problem that can be addressed from different viewpoints, 
depending on how significance is quantified. The most common method for assessing
overlap is to ignore the underlying interactions and focus on the size 
of the overlap (number of vertices), compared to the sizes of 
subgraph pairs. In this approach, one often uses Fisher's exact test, 
also known as the hypergeometric test \cite{handbookbiostats}. This 
method is described in more detail in Section~\ref{HGTSection}. 
Another way to define significance of overlap is in terms of density. 
Here, an overlap is defined to be significant if there are ``many'' 
edges in the overlap set, compared to a null model. In this class of 
methods the size of overlap set is assumed to be fixed and density of 
overlap is assessed, conditioned on the fixed size. The key aspect of this
measure is the the density of overlap. 

There has been extensive research focused on the algorithmic task of 
identifying dense components in graphs (see, e.g., \cite{densesurvey2010} for a survey).  
Regarding the study of dense subgraphs in random graphs, the papers by Arratia in 1990 
and Koyut\"urk et al. in 2007 \cite{arratia1990,koyuturk2007} are perhaps the most relevant.  Both study
the typical sizes of subgraphs of a given density
in Erd\H{o}s-R\'enyi graphs, motivated by the study of protein interaction 
networks.  The former also gives a result on overlaps between dense subgraphs,
but only in the context of dense graphs, unlike social or biological networks that are 
sparse and the number of edges and nodes are typically of the same order.  The 
distributions of the maximum sizes of cliques and independent sets in random graphs 
have also been studied, essentially completely \cite{bollobas}.  The typical behavior 
of the overlaps between independent sets in the case of sparse
random graphs (equivalently, cliques in extremely dense random graphs) 
has recently been studied \cite{sudan2014}. However, the range of parameters for
which the results are derived differs significantly from ours, rendering these results
less relevant to our applications of interest. While these contributions are fundamental to our 
understanding of statistics regarding the distribution of dense components in a graph, 
{\em none} of them directly address the problem of assessing the overlap among subgraphs. 
Here, we combine the two approaches based on sizes of the overlapping subgraphs and the 
density of the overlap compared to the constituent subgraphs to analytically derive a  
\emph{p}-value for the overlap that considers both its size and density. 

From an applications point of view, this problem has important implications in social networks,
systems biology, financial transactions, and network security. In systems 
biology, as pathways associated with specific biological functions are fully resolved, 
there is increasing need to assess the extent of overlap/ cross-talk among these pathways. 
Pathway interactions have been shown to play a key role in development and progression 
of cancers \cite{Sun2015, Wang}. Different groups have focused on identifying the 
cross-talk map among pathways \cite{SGA, Tegge2015}. However, to the best of our 
knowledge, none of these methods directly use the overlap subgraph to infer 
pathway interactions. Another important application of our method in life sciences
is in geneset enrichment analysis. While the hypergeometric test remains the most 
widely used method for identifying functional enrichment of a set of differentially 
expressed genes, most recent methods do not focus only on the geneset overlap but 
also try to incorporate network context to evaluate functional importance of 
genesets \cite{Mitrea2013, Dong2016}. However, a majority of these methods rely on 
computing empirical \emph{p}-values instead of providing a closed-form solution. 
Our method is the first method that quantifies statistical significance within a network
context, and in doing so, provides significantly higher discriminating power than prior
methods, while using significantly fewer computing resources.

\section{Statistical significance of subgraph overlaps}

Given a pair of overlapping dense subgraphs in a graph, our goal is to derive
a measure of the extent to which we should be surprised by their overlap.  
The appropriate framework for this is that of \emph{statistical 
significance}.  In the most general setting, we have a random variable $\rvar{X}$ 
on a probability space and an observed value $\hat{X}$ for $X$, and we want a 
measure of the surprise in observing $\hat{X}$.  To the observed value $\hat{X}$, 
we associate a \emph{p-value}, which is given by the probability that $\rvar{X}$ 
takes a value \emph{as extreme or more extreme} than $\hat{X}$ (the notion of 
extremity depends on the range of the random variable).  If the $p$-value is less 
than some fixed threshold, the observed value is said to be \emph{statistically 
significant} with respect to the distribution of $\rvar{X}$ (the \emph{prior}); that is, the 
observation $\hat{X}$ is unlikely to have occurred by chance.  

For a given random variable (which need not even take values in a partially ordered 
set), there may be many non-equivalent notions of $p$-value that take into account 
various pieces of observable information about $\rvar{X}$.  In general, a notion of 
statistical significance is preferable to another if it has more 
\emph{discriminating power}, in the sense that the $p$-value decays smoothly as 
observations become more extreme (this allows for statistically significant 
observations to be detected \emph{and} to be ranked by relative significance).

In what follows, we will describe three formulations of statistical significance 
for subgraph overlaps.  The first two, which are already present in literature, 
take into account only partial information: the first takes into account only the 
sizes of the subgraphs and their intersection; the second takes into account only 
the size and density of the intersection of the overlapping subgraphs.  The third, 
which we call \emph{CoDO} and which is the main topic of this paper, combines subgraph
sizes and densities of individual subgraphs along with their their overlaps, to yield a test with significantly
greater discriminating power. 

The basic setup in all three tests is as follows: we have a
graph $\Graph{G} \distrib \G(n, p)$ (where $\G(n, p)$ denotes the 
Erd\H{o}s-R\'enyi distribution on graphs of size $n$, with edge 
probability $p$) and subgraphs $A$ and $B$ with $X = A \setminus B$, $Y 
= B\setminus A$, and $Z = A\lintersect B$.  We also have the following
definitions and notation relating to density:
For any subgraph $S \subseteq \Graph{G}$, we denote by $E(S)$ the
set of edges in $S$, and $e(S) = |E(S)|$.  For a pair of subgraphs 
$S_1, S_2$, we denote the set of edges between nodes in $S_1$ and
nodes in $S_2$ by $E(S_1, S_2)$, and $e(S_1, S_2) = |E(S_1, S_2)|$.
We denote by $\delta(S)$ the \emph{density} of $S$:
\begin{align*}
	\delta(S) = \frac{e(S)}{ {{|S|}\choose 2}}.
\end{align*}
Similarly, for $S_1$, $S_2$,
\begin{align*}
	\delta(S_1, S_2) = \frac{e(S_1, S_2)}{|S_1||S_2|}.
\end{align*}

We now describe several $p$-value formulations.
\subsection{Hypergeometric Tail (HGT)}
\label{HGTSection}
The \emph{hypergeometric tail} $p$-value takes into account only the
\emph{size} of the overlap $Z$ of $A$ and $B$.  It is defined by considering the 
following probabilistic experiment (note that this hypothetical experiment is introduced
for the purpose of defining a particular probability distribution as the outcome of some
random process; it is \emph{not} meant to be implemented by a user of our methods): we fix a subset
$\hat{B} \subseteq V(G)$ of size $|\hat{B}| = |B|$, and we draw, uniformly at 
random from $V(\Graph{G})$, a subset $\hat{A}$ of nodes of size $|\hat{A}| = |A|$.  
The size $|\hat{A} \lintersect \hat{B}|$ of the overlap is then hypergeometrically 
distributed, so the hypergeometric
$p$-value is given by the probability that the overlap in this experiment has size 
at least that of the observed overlap $Z$.  That is,
\begin{align*}
	\Pr[|\hat{A} \lintersect \hat{B}| \geq |Z|]
    = 1 - F(|Z|-1, n, |\hat{B}|, |\hat{A}|),
\end{align*}
where $F(x, y, z, w)$, for arbitrary $x \in \N \lunion \{0\}$, denotes the probability that a hypergeometrically distributed 
random variable with population size $y$, number of successes $z$, and number of 
trials $w$, takes a value at least $x$.

\subsection{Erd\H{o}s-R\'{e}nyi Density Model (ERD)}
A second approach to scoring of $p$-values of overlaps takes into account only the 
size and density of the overlap set $\Set{Z}$.  It is defined as the probability 
that there exists in $\Graph{G}$ a subgraph of density $\delta(\Set{Z})$ and size 
at least $|\Set{Z}|$.  This is given by
\begin{align*}
	\Pr[\exists \Set{H} \subseteq \Graph{G}, |\Set{H}|\geq |\Set{Z}| ~:~ \delta(\Set{H}) = \delta(\Set{Z})].
\end{align*}
Upper bounds on this probability have been worked out by, e.g., Arratia in 1990 and Koyut\"urk et al. in 2007 
\cite{arratia1990, koyuturk2007}.  The main tool for this is the \emph{first moment method} 
\cite{alonspencer}: the event that there exists a subgraph with a given property is 
precisely equal to the event that the number of subgraphs with the given property 
is at least $1$.  The first moment method consists of an application of Markov's 
inequality to conclude that
\begin{align*}
	\Pr[\#\{\Set{H} ~:~ |\Set{H}| \geq |\Set{Z}|, \delta(\Set{H}) =  \delta(\Set{Z}) \} \geq 1] 
    \leq 
    \E[\#\{\Set{H} ~:~ |\Set{H}| \geq |\Set{Z}|, \delta(\Set{H}) = \delta(\Set{Z}) \}]
\end{align*}
The expected value is then easily calculated using linearity of expectation. 
For a fixed value of $\delta(Z)$, the threshold value of $|Z|$ below which the 
expected value tends to $\infty$ as $n\to\infty$ and above which it tends to $0$ 
turns out to be $\frac{2\log(n)}{\kappa(\delta(Z), p)}$, where $\kappa(a, b)$ is 
given by
\begin{align}
	\kappa(a, b) = a\log(a/b) + (1-a)\log( (1-a)/(1-b)),
   	\label{KappaDefinition}
\end{align}
the relative entropy between Bernoulli random variables with parameter $a$ and $b$, 
respectively.
The first moment method shows that overlaps with density $\delta(Z)$ and size at 
least $(1+\epsilon)|Z|$, for any fixed positive $\epsilon$, are unlikely to occur 
randomly.  The \emph{second moment method}, which takes into account the dependence 
between events defined on overlapping subgraphs, can be used to give a matching 
probabilistic lower bound, which shows that, with high probability (i.e., with 
probability asymptotically tending to $1$), the maximum size of any $\delta(Z)$-
dense subgraph is asymptotically equivalent to $\frac{2\log(n)}{\kappa(\delta(Z), 
p)}$.

Because of the very sharp transition, as $|Z|$ increases, from insignificant to 
significant, this formulation has the undesirable property that it has little 
discriminating power.  In particular, the first moment upper bound implies that
\begin{align*}
	\Pr[\#\{\Set{H} ~:~ |\Set{H}| \geq |\Set{Z}|, \delta(\Set{H}) =  \delta(\Set{Z}) \} \geq 1]
    \leq e^{-\Theta(|\Set{Z}|^2)}.
\end{align*}
That is, the $p$-value decays superexponentially as $|Z|$ increases.

\subsection{Combining Density/Overlap (CoDO)}
\label{sec:CoDO_theory}

The measures of statistical significance presented above are not well suited
to real-world applications, where the densities and sizes of the the dense
components may vary significantly, and where we desire smoother transitions.
Motivated by these shortcomings, we propose an alternate formulation that
combines information about the size \emph{and} density of the 
overlapping set, defined by the following probabilistic 
experiment:
\begin{itemize}
	\item
    	Consider a set $\Graph{V}$ of vertices, with size $\var{n}$,
        and a distinguished subset $\Set{\hat{B}}$ of size 
        $|\Set{B}|$.  Choose uniformly at random from $\Graph{V}$ a
        subset $\Set{\hat{A}}$ of size $|\Set{A}|$.
	\item
    	In the set $\Set{\hat{A}} \lunion \Set{\hat{B}}$, choose uniformly
        at random $|e(\Set{A}\lunion \Set{B})|$ edges to insert.
\end{itemize}
Then, we define the combined density/overlap $p$-value to be the probability
that the resulting overlap $\Set{\hat{A}} \lintersect \Set{\hat{B}}$ has size at 
least that of the observed overlap $|\Set{Z}|$ \emph{and} that the density of the 
overlapping set $\delta(\Set{\hat{A}} \lintersect \Set{\hat{B}})$ is at least the 
observed overlap density $\delta(\Set{Z})$.
That is, it is given by
\begin{align*}
	\pCoDO
    = \Pr[|\Set{\hat{A}} \lintersect \Set{\hat{B}}| \geq |\Set{Z}| \lintersect \delta(\Set{\hat{A}} ~\lintersect~ \Set{\hat{B}}) \geq \delta(\Set{Z})]
\end{align*}
By conditioning on the size of the overlap, we can get an explicit formula for this 
$p$-value in terms of hypergeometric tails:
\begin{align}
	\label{pCoDOSumFormula}
	\pCoDO 
    = \sum_{j=|\Set{Z}|}^{\min\{|\hat{A}|, |\hat{B}|\}}
      \Pr\left[|\hat{A}\lintersect \hat{B}| = j\right]
      \cdot \Pr\left[\delta(\hat{A}\lintersect \hat{B}) \geq \delta(\Set{Z}) \big|  |\hat{A} \lintersect \hat{B}| = j \right]
\end{align}
The first factor in the terms of the sum is easily seen to be a hypergeometric 
probability mass:
\begin{align*}
	|\hat{A}\lintersect\hat{B}|
    \distrib \Hypergeometric(n, |B|, |A|),
\end{align*}
which is the number of successes when one draws without replacement
$|A|$ samples from a collection of $n$ items, $|B|$ of which are successes.  
Analogously, the second factor
is a hypergeometric probability tail, since
\begin{align*}
	e(\hat{A}\lintersect \hat{B})
    \distrib \Hypergeometric\left( {{ |A| + |B| - j}\choose 2}, {j\choose 2}, e(A\lunion B)\right).
\end{align*}
Thus, the sum is explicitly computable, and tail bounds can be applied if approximations are desired.
Note that the expression (\ref{pCoDOSumFormula}) involves the observed sizes and densities of the constituent subgraphs,
as well as the overlaps, through the distributions of $|\hat{A} \lintersect \hat{B}|$ and $\delta(\hat{A} \lintersect \hat{B})$.

\subsubsection{Characteristics of the CoDO $p$-Value}

We investigate in more detail some interesting properties of the CoDO
$p$-value defined above.

An elementary observation is that, whenever the density of overlap subgraph $\delta(Z)$
is low enough, the conditional probabilities in the $\pCoDO$ formula
degenerate to $1$, and it becomes equivalent to the hypergeometric
tail $p$-value. Similarly, when the overlap size $|Z|$ is small, all probabilities degenerate to $1$.

Next, we look at the behavior of $p_{CoDO}$ when we fix $\delta(Z) =
\rho_Z \in (0, 1)$ and vary the observed overlap size $|Z|$.  We
consider $z = \frac{|Z|}{M} \in (0, 1)$, where we define 
$M = \min\{|A|, |B|\}$.

We have the following lemma:
\begin{lemma}[Monotone decrease as $|Z|$ increases]
	\label{MonotonicityLemma1}
	For $\rho_Z$ fixed as above, for $z_1 < z_2$, we have
    \begin{align*}
    	\pCoDO(z_1, \rho_Z) 
        \geq \pCoDO(z_2, \rho_Z). 
    \end{align*}
\end{lemma}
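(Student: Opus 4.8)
The plan is to exhibit a coupling between the two CoDO experiments parametrized by $z_1$ and $z_2$, so that the ``success'' event at size $z_2$ is contained in the ``success'' event at size $z_1$. Note that in the formula (\ref{pCoDOSumFormula}) the quantities $|\hat A|$, $|\hat B|$, $n$, and $e(A \cup B)$ are all held fixed, and $|Z| = z M$ enters only through the lower limit of the sum \emph{and} through the threshold $\delta(Z) = \rho_Z$ inside the conditional tail. Since $\rho_Z$ is fixed by hypothesis, the conditional factor $\Pr[\delta(\hat A \cap \hat B) \ge \rho_Z \mid |\hat A \cap \hat B| = j]$ does \emph{not} depend on $|Z|$ at all; it is a function of $j$ alone, call it $g(j)$. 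Likewise the first factor $\Pr[|\hat A \cap \hat B| = j]$ is just the hypergeometric mass $h(j)$, independent of $|Z|$. Hence
\begin{align*}
	\pCoDO(z, \rho_Z) = \sum_{j = \lceil zM \rceil}^{\min\{|\hat A|, |\hat B|\}} h(j)\, g(j),
\end{align*}
and each term $h(j) g(j)$ is nonnegative. Increasing $z$ from $z_1$ to $z_2$ only raises the lower limit of summation (from $\lceil z_1 M\rceil$ to $\lceil z_2 M\rceil$), so it removes a block of nonnegative terms. This immediately gives $\pCoDO(z_1, \rho_Z) \ge \pCoDO(z_2, \rho_Z)$.

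The key steps, in order, are: (i) write out (\ref{pCoDOSumFormula}) and observe that with $\rho_Z$ held fixed the summand $h(j) g(j)$ is a function of $j$ alone and is independent of the overlap size; (ii) observe $h(j) g(j) \ge 0$ since both factors are probabilities; (iii) note that $z_1 < z_2$ implies $\lceil z_1 M \rceil \le \lceil z_2 M \rceil$, so the index set for $z_2$ is a subset of that for $z_1$; (iv) conclude the inequality by dropping nonnegative terms.

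I do not anticipate a serious obstacle here — the statement is essentially a bookkeeping observation about the summation in (\ref{pCoDOSumFormula}). The one subtlety worth spelling out is that the conditional density-tail factor genuinely does not move when $|Z|$ changes: it is tempting to worry that changing $|Z|$ changes the conditioning event $\{|\hat A \cap \hat B| = j\}$, but it does not, since $j$ is the free summation index and the geometry of the null experiment (the set $\hat A \cup \hat B$, the number $e(A \cup B)$ of inserted edges) is fixed independently of the observed $|Z|$. Once that is noted, the proof is the one-line term-dropping argument above.
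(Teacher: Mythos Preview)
Your proposal is correct and matches the paper's own proof essentially verbatim: both observe that the summand in (\ref{pCoDOSumFormula}) is nonnegative and independent of $|Z|$ once $\rho_Z$ is fixed, so raising the lower limit from $z_1 M$ to $z_2 M$ simply drops nonnegative terms. Your extra remark that the conditional density factor does not depend on $|Z|$ is a helpful clarification but not a different idea.
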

\begin{proof}
	We have
    \begin{align*}
    	\pCoDO(z_1, \rho_Z) - \pCoDO(z_2, \rho_Z) 
        =&~ \sum_{j=z_1M}^{M} 
          \Pr[|\hat{A}\lintersect\hat{B}| = j] \Pr\left[\delta(\hat{A}\lintersect \hat{B}) \geq \rho_Z \big| |\hat{A}\lintersect\hat{B}| = j \right] \\
          &-\sum_{j=z_2M}^{M} 
          \Pr[|\hat{A}\lintersect\hat{B}| = j] \Pr\left[\delta(\hat{A}\lintersect \hat{B}) \geq \rho_Z \big| |\hat{A}\lintersect\hat{B}| = j \right] \\
    	&=  \sum_{j=z_1M}^{z_2M - 1} 
          \Pr[|\hat{A}\lintersect\hat{B}| = j] \Pr\left[\delta(\hat{A}\lintersect \hat{B}) \geq \rho_Z \big| |\hat{A}\lintersect\hat{B}| = j \right]
    \end{align*}
    Since all of the terms of the final sum are at least $0$,
    this completes the proof.
\end{proof}

It then becomes interesting to investigate the point at which $\pCoDO$
transitions from insignificant to significant as a function of $z$. 
To do this we, define the \emph{threshold point} $\theta(\rho_Z)$ by
\begin{align*}
	\theta(\rho_Z) = z_* 
    = \inf \{ z \in (0, 1) ~:~ \pCoDO(z, \rho_Z) \leq 1/2 \}.
\end{align*}
Note that, whenever it exists, the threshold point is unique, because of
the monotonicity property just proven.  In principle, the asymptotic 
value of the threshold point, as a function of $\rho_Z$, is explicitly 
computable. However, the fact that $\E\left[e(\hat{A}\lintersect \hat{B}) \big| 
|\hat{A}\lintersect\hat{B}=j|\right]$ is an increasing function of $j$ 
whenever $j$ is large enough makes this somewhat subtle.  
The following upper bound can be established:
\begin{theorem}
	\label{ThresholdTheorem1}
	With $\rho_Z \in (0, 1)$ fixed and 
    \[
    	M = \min\{|A|, |B|\} \gg \sqrt{n},
    \]    
    we have, for any fixed $\delta > 0$,
    \begin{align*}
    	\theta(\rho_Z) 
        \leq \frac{\E[|\hat{A}\lintersect\hat{B}|]}{M}(1 + \delta).
    \end{align*}
\end{theorem}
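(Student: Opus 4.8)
The plan is to show that once $z$ exceeds $\E[|\hat A \lintersect \hat B|]/M$ by any fixed multiplicative factor $1+\delta$, the $p$-value $\pCoDO(z,\rho_Z)$ is already below $1/2$; by the definition of $\theta$ and the monotonicity of Lemma \ref{MonotonicityLemma1}, this immediately gives $\theta(\rho_Z) \le \frac{\E[|\hat A \lintersect \hat B|]}{M}(1+\delta)$. The key observation is that $\pCoDO(z,\rho_Z)$ is a sum over $j \ge zM$ of terms bounded above by $\Pr[|\hat A \lintersect \hat B| = j]$, so
\begin{align*}
  \pCoDO(z,\rho_Z)
  \;=\; \sum_{j = zM}^{M} \Pr[|\hat A \lintersect \hat B| = j]\cdot
        \Pr[\delta(\hat A \lintersect \hat B) \ge \rho_Z \mid |\hat A \lintersect \hat B| = j]
  \;\le\; \Pr\big[|\hat A \lintersect \hat B| \ge zM\big].
\end{align*}
Thus it suffices to control the upper tail of the hypergeometric variable $|\hat A \lintersect \hat B| \distrib \Hypergeometric(n, |B|, |A|)$, whose mean is $\E[|\hat A \lintersect \hat B|] = |A||B|/n$.

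Next I would invoke a concentration inequality for the hypergeometric distribution — a Chernoff/Hoeffding-type bound (the hypergeometric is at least as concentrated as the corresponding binomial, so the standard multiplicative Chernoff bound applies). Writing $\mu = \E[|\hat A \lintersect \hat B|]$ and setting $zM = (1+\delta)\mu$, we get
\begin{align*}
  \Pr\big[|\hat A \lintersect \hat B| \ge (1+\delta)\mu\big]
  \;\le\; \exp\!\left(-\frac{\delta^2 \mu}{2+\delta}\right).
\end{align*}
The hypothesis $M \gg \sqrt{n}$ is precisely what forces $\mu = |A||B|/n \ge M^2/n \to \infty$, so the right-hand side tends to $0$ and in particular drops below $1/2$ for $n$ large. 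Hence for all $z \ge (1+\delta)\mu/M$ we have $\pCoDO(z,\rho_Z) < 1/2$, so the infimum defining $\theta(\rho_Z)$ is at most $(1+\delta)\mu/M = \frac{\E[|\hat A \lintersect \hat B|]}{M}(1+\delta)$, which is the claim.

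The main obstacle — and the reason the bound is only an inequality rather than an exact asymptotic — is the asymmetry between the two factors in each summand. Discarding the conditional density factor $\Pr[\delta(\hat A \lintersect \hat B) \ge \rho_Z \mid \cdots] \le 1$ is lossy: as remarked in the text, $\E[e(\hat A \lintersect \hat B) \mid |\hat A \lintersect \hat B| = j]$ grows with $j$, so the density constraint becomes progressively easier to satisfy as $j$ increases, meaning the dominant contribution to $\pCoDO$ genuinely comes from the overlap-size tail and the crude bound above is tight up to the $(1+\delta)$ slack. A secondary technical point is ensuring the Chernoff bound for the hypergeometric is applied with the correct parameters (population $n$, successes $|B|$, draws $|A|$) and that one has the freedom to take $n$ large while $\delta$ is held fixed; both are routine. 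I would also note that $zM$ need not be an integer, so one should replace it by $\lceil zM\rceil$ throughout — this changes nothing asymptotically since $\mu \to \infty$.
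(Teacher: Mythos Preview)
Your proposal is correct and follows essentially the same route as the paper: bound $\pCoDO(z,\rho_Z)$ above by the hypergeometric tail $\Pr[|\hat A\cap\hat B|\ge (1+\delta)\mu]$ by discarding the conditional density factor, then apply a hypergeometric concentration inequality and use $M\gg\sqrt{n}$ to force $\mu=|A||B|/n\to\infty$. The only cosmetic difference is that the paper invokes the relative-entropy form of the bound (Lemma~\ref{HypergeometricLargeDeviationsLemma}) and then expands $\kappa((1+\delta)|B|/n,|B|/n)$ asymptotically, whereas you go directly to the multiplicative Chernoff form $\exp(-\delta^2\mu/(2+\delta))$; both yield an exponent of order $-\mu$ and the conclusion is identical.
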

To prove this, we will need the following lemma giving large deviation 
bounds for the hypergeometric distribution \cite{chvatal1979}.  It is a consequence
of the Hoeffding inequality \cite{dembozeitouni}.
\begin{lemma}[Hypergeometric large deviations]
	\label{HypergeometricLargeDeviationsLemma}
	Let $X \distrib \Hypergeometric(N, K, n)$ (the parameters denote
    the population size, number of successes, and number of trials, 
    respectively).  Let $p = K/N$.  Then we have the following tail
    bound:
    \begin{align*}
    	\Pr[X \geq (1 + \delta)\E[X]]
        \leq \exp( -n\kappa((1+\delta)p, p) ),
    \end{align*}
    where $\delta$ is any number greater than $0$,
    and $\kappa(a, b)$ is given by (\ref{KappaDefinition}).
\end{lemma}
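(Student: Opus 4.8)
\textbf{Proof plan for Lemma~\ref{HypergeometricLargeDeviationsLemma}.}
The statement to establish is a concentration bound for a hypergeometric random
variable $X \distrib \Hypergeometric(N, K, n)$, namely
$\Pr[X \geq (1+\delta)\E[X]] \leq \exp(-n\kappa((1+\delta)p, p))$ with $p = K/N$.
Since the lemma is explicitly attributed to Chv\'atal and described as a
consequence of Hoeffding's inequality, the plan is to reduce the hypergeometric
tail to a binomial tail and then apply a Chernoff-type bound in the sharp
relative-entropy form. First I would recall Hoeffding's observation that a
hypergeometric random variable is, in the relevant stochastic sense,
dominated by the corresponding binomial: if $S \distrib \Binomial(n, p)$ is the
number of successes when drawing $n$ samples \emph{with} replacement from the
same population, then for any convex function $\phi$ one has
$\E[\phi(X)] \leq \E[\phi(S)]$. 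The key step is to invoke this comparison with
$\phi$ taken to be the exponential $\phi(x) = e^{tx}$ for $t > 0$, which is
convex, so that the moment generating function of $X$ is bounded by that of $S$.

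\textbf{Main steps.}
Taking the comparison as the starting point, I would proceed by the standard
exponential (Chernoff) method. For any $t > 0$, Markov's inequality applied to
$e^{tX}$ gives
\begin{align*}
	\Pr[X \geq (1+\delta)\E[X]]
	\leq e^{-t(1+\delta)np} \, \E[e^{tX}]
	\leq e^{-t(1+\delta)np} \, \E[e^{tS}],
\end{align*}
where the second inequality is the Hoeffding comparison and we have used
$\E[X] = np$. Since $S$ is a sum of $n$ i.i.d.\ Bernoulli($p$) variables, its
moment generating function factorizes as $\E[e^{tS}] = (1 - p + p e^t)^n$.
Substituting and writing $a = (1+\delta)p$, the bound becomes
\begin{align*}
	\Pr[X \geq (1+\delta)\E[X]]
	\leq \exp\!\left( -t a n + n \log(1 - p + p e^t) \right).
\end{align*}
The final step is to optimize over $t > 0$: minimizing the exponent is a
one-variable calculus problem whose optimizer is $e^{t_*} = \frac{a(1-p)}{p(1-a)}$,
valid precisely when $a = (1+\delta)p \in (0,1)$. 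Back-substituting this value
collapses the exponent to $-n[a\log(a/p) + (1-a)\log((1-a)/(1-p))]$, which is
exactly $-n\kappa((1+\delta)p, p)$ by the definition~(\ref{KappaDefinition}) of
$\kappa$. This yields the claimed bound.

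\textbf{Anticipated obstacle.}
The principal subtlety is justifying the reduction from the hypergeometric to the
binomial moment generating function. Hoeffding's comparison theorem for sampling
with versus without replacement is the crux, and I would either cite it directly
from \cite{dembozeitouni} or, if a self-contained argument is wanted, sketch its
proof: the sample without replacement can be written as a martingale/exchangeable
sequence, and convexity of $x \mapsto e^{tx}$ together with a conditioning
argument (averaging each draw over the remaining population) shows the MGF cannot
increase relative to independent draws. A secondary point to handle carefully is
the domain restriction $a = (1+\delta)p < 1$, which is needed for the optimizing
$t_*$ to be finite and positive; when $(1+\delta)p \geq 1$ the tail event is
either impossible or the bound holds trivially, so this edge case must be noted
but does not affect the main argument. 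The optimization itself is routine and
mirrors the derivation of the standard binomial Chernoff bound in its
relative-entropy form.
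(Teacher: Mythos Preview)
Your proof plan is correct and follows exactly the standard route: Hoeffding's comparison of sampling with and without replacement to bound the hypergeometric MGF by the binomial one, followed by the Chernoff--Cram\'er optimization that produces the relative-entropy exponent $\kappa((1+\delta)p,p)$. The edge-case remark about $(1+\delta)p \geq 1$ is also appropriate.

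There is, however, nothing to compare against: the paper does not give its own proof of this lemma. It is stated as a known result, attributed to Chv\'atal \cite{chvatal1979} and described as a consequence of Hoeffding's inequality \cite{dembozeitouni}, and then immediately used in the proof of Theorem~\ref{ThresholdTheorem1}. Your sketch is precisely the argument those citations point to, so you have supplied what the paper omits rather than diverged from it.
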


\begin{proof}[of Theorem~\ref{ThresholdTheorem1}]
	Let $z = (1+\delta)\frac{\E[|\hat{A}\lintersect\hat{B}|]}{M}$.
	We have the following upper bound on $\pCoDO(z, \rho_Z)$.
    \begin{align}
    	\pCoDO(z, \rho_Z)
        \leq \sum_{j=zM}^{M} \Pr[|\hat{A}\lintersect\hat{B}|=j]p_* 
        = \Pr[|\hat{A}\lintersect\hat{B}| \geq zM]p_*,
        \label{IntermediateExpr}
    \end{align}
    where we define
    \begin{align*}
    	p_* = \max_j\left\{\Pr\left[  \delta(\hat{A}\lintersect \hat{B}) \geq  \rho_Z \big| |\hat{A}\lintersect\hat{B}|=j \right]\right\},
    \end{align*}
    where the maximum is taken over all terms $j$ in the sum
    (this is allowed because all of the probabilities are 
    non-negative).
    
    Next, we apply Lemma~\ref{HypergeometricLargeDeviationsLemma}
    to further upper bound (\ref{IntermediateExpr}) by 
    \begin{align*}
    	\pCoDO(z, \rho_Z)
        \leq
        \Pr[|\hat{A}\lintersect \hat{B}| \geq 
        	(1+\delta)\E[|\hat{A}\lintersect \hat{B}| ]]p_* 
        \leq 
        \exp(-|A|\kappa((1+\delta)|B|/n, |B|/n )) p_*.
    \end{align*}
    We now examine the role of the relative entropy factor in the
    exponent.  As $|B|/n \xrightarrow{n\to\infty} 0$, its first term
    asymptotically dominates:
    \begin{align*}
    	\kappa( (1+\delta)|B|/n, |B|/n)
        = (1+\delta)|B|/n \log(1+\delta) + o(1).
    \end{align*}
    Using the fact that $|B| \geq M \gg (\sqrt{n})$, we then have
    that the entire exponent tends to $-\infty$.  That is, provided
    that $n$ is large enough, $\pCoDO(z, \rho_Z)$ is 
    arbitrarily small and, in particular, is less than $1/2$.
\end{proof}
Note that, in the proof of this theorem, we did not use any information
about the conditional probability in the $\pCoDO$ formula.  This is 
another incarnation of the phenomenon that the conditional expectation
of the number of edges in $\hat{A}\lintersect \hat{B}$ increases with
$j$.  Thus, broadening the applicability of this bound requires a more
careful study of how the conditional probabilities behave as $j$
increases.

The behavior of $\pCoDO(z, \rho_Z)$ for fixed and varying $\rho_Z$ is
also of interest.  We have an analogue of Lemma~\ref{MonotonicityLemma1}:
\begin{lemma}[Monotone decrease as $\rho_Z$ increases]
	\label{MonotonicityLemma2}
    For fixed $z \in (0, 1)$, we have, for $\rho_1 < \rho_2$,
    \begin{align*}
    	\pCoDO(z, \rho_1)
        \geq \pCoDO(z, \rho_2).
    \end{align*}
\end{lemma}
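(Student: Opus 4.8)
The plan is to reprise the proof of Lemma~\ref{MonotonicityLemma1}, but now exploiting the observation that the range of summation in formula~(\ref{pCoDOSumFormula}) depends on $z$ and $M = \min\{|A|,|B|\}$ but \emph{not} on $\rho_Z$; so increasing $\rho_Z$ leaves the index set of the sum untouched and only shrinks each term. Concretely, I would first write both quantities in the conditioned form
\begin{align*}
	\pCoDO(z, \rho_i)
	= \sum_{j=zM}^{M}
	  \Pr\left[|\hat{A}\lintersect \hat{B}| = j\right]
	  \cdot \Pr\left[\delta(\hat{A}\lintersect \hat{B}) \geq \rho_i \,\big|\, |\hat{A} \lintersect \hat{B}| = j \right],
	  \qquad i \in \{1, 2\},
\end{align*}
and note that it therefore suffices to compare the two sums term by term.

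The key step is the nesting of the conditional tail events. Since $\rho_1 < \rho_2$, any outcome of $\hat{A}\lintersect\hat{B}$ whose density is at least $\rho_2$ also has density at least $\rho_1$; hence, conditionally on $|\hat{A}\lintersect\hat{B}| = j$,
\begin{align*}
	\Pr\left[\delta(\hat{A}\lintersect \hat{B}) \geq \rho_2 \,\big|\, |\hat{A} \lintersect \hat{B}| = j \right]
	\leq
	\Pr\left[\delta(\hat{A}\lintersect \hat{B}) \geq \rho_1 \,\big|\, |\hat{A} \lintersect \hat{B}| = j \right].
\end{align*}
If desired, one can make this fully explicit using the distributional fact already recorded in the text: conditioned on the overlap size $j$, $e(\hat{A}\lintersect\hat{B}) \distrib \Hypergeometric\left(\binom{|A|+|B|-j}{2}, \binom{j}{2}, e(A\lunion B)\right)$, so the displayed inequality is just monotonicity of a hypergeometric tail in its threshold.

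Finally, because each first factor $\Pr[|\hat{A}\lintersect\hat{B}| = j]$ is non-negative, multiplying the above inequality by it and summing over $j$ from $zM$ to $M$ gives $\pCoDO(z, \rho_2) \leq \pCoDO(z, \rho_1)$, as claimed. As for the main obstacle: there really is none here. Unlike Theorem~\ref{ThresholdTheorem1}, this is a soft, distribution-free monotonicity coming purely from the nesting of tail events, and the only point requiring a moment's care is the complementary one to Lemma~\ref{MonotonicityLemma1} --- there the summation range shifted while the conditional factors were held fixed, whereas here the range is frozen and it is the conditional factors that shrink.
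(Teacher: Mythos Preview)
Your proposal is correct and follows essentially the same approach as the paper: both write $\pCoDO(z,\rho_i)$ via the conditioned sum over $j$, fix the common summation range, and use monotonicity of the conditional tail $\Pr[\delta(\hat{A}\lintersect\hat{B})\geq \rho \mid |\hat{A}\lintersect\hat{B}|=j]$ in $\rho$ (the paper phrases this as ``monotonicity of the CDF'') together with non-negativity of $\Pr[|\hat{A}\lintersect\hat{B}|=j]$ to conclude. Your additional remark identifying the conditional edge count as a hypergeometric and your explicit contrast with Lemma~\ref{MonotonicityLemma1} are nice touches but not needed for the argument.
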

\begin{proof}
	We have
    \begin{align*}
    	&\pCoDO(z, \rho_1)
        - \pCoDO(z, \rho_2)  \\
        &= \sum_{j=zM}^{M}
        	\Pr[|\hat{A}\lintersect \hat{B}|=j]\cdot 
            \left[
            	\Pr\left[\delta(\hat{A}\lintersect\hat{B}) \geq \rho_1
                	\big| |\hat{A}\lintersect\hat{B}|=j\right] \right.
                \left.~~- \Pr\left[\delta(\hat{A}\lintersect\hat{B}) \geq \rho_2
                	\big| |\hat{A}\lintersect\hat{B}|=j \right]    
            \right]
    \end{align*}
    By monotonicity of the CDF of a random variable, the difference 
    of conditional probabilities is non-negative, so that the entire
    sum is positive.  This completes the proof.
\end{proof}
As an analogue of the definitions for fixed $\rho_Z$ and varying $z$,
we can define a threshold point $\phi(z)$. 
We can also give analogous bounds
for the location of the threshold point.  The details of
the derivation are entirely similar to those for fixed $z$ and varying $\rho_Z$.

\section{Experimental Validation}

We provide a detailed experimental evaluation of CoDO, in comparison with other measures, in the contest of synthetic datasets, datasets derived from social networks, and datasets from biomolecular interactions. We use synthetic datasets to characterize the dependence of our results of choices of parameters. We use the other datasets to demonstrate the robustness and application scope of our framework, even in cases where the networks do not follow an Erdos-Renyi model.

\subsection{Assessing Overlap in Synthetic Graphs}
\label{sec:syn_results}

To evaluate the performance of each measure (and associated method) on a controlled dataset,
we create a synthetic test case that models the behavior of functional modules at a smaller
scale. We sample an ER graph $\Graph{G}_R$ from $\G(n, p)$ with $n=80$ and
$p=\frac{3}{n}$. We embed two modules with the same high density ($\rho_{\Set{A}} =
\rho_{\Set{A}} = 10p$) of size 50 and 40 vertices in $\Graph{G}_R$ with varying levels
of overlaps and density. For the overlap, the expected value of hypergeometric distribution
for this setting is $\frac{n_\Set{A}n_\Set{A}}{n} = 25$, where $n_A$ and $n_B$ are the
number of vertices in modules $\Graph{A}$ and $\Graph{B}$, respectively. We chose overlap
sizes of 20 and 30 vertices, which are below and above the expected value of number of overlapping
vertices, respectively, to represent low and high overlaps. For simulating the density
of overlap region, we tested a sparse overlap by setting
$\rho_{\Set{Z}} = 2p = \frac{\rho_{\Set{A}}}{5}$ and a dense overlap by setting
$\rho_{\Set{Z}} = 10p = 2\rho_{\Set{A}}$. For each of these four settings, we computed
\textit{HGT, ERD}, and \textit{CoDO} \emph{p}-values, the results of which are illustrated
in Figure~\ref{fig:results_synthetic}. Each subfigure represents the adjacency matrix
of the constructed test case, with modules $\Graph{A}$ and $\Graph{B}$ color-coded as
blue and green, respectively. The overlap set of the two modules is separately marked
in red. The first notable observation in our experiments is that \textit{ERD} has a
sharp transition from \emph{p}-value of 1 to 0 as we vary $\rho_{\Set{Z}}$ from
$\rho$ to $2\rho$. On the other hand, \textit{CoDO} exhibits a smooth transition as
we increase $\rho_{\Set{Z}}$.

\begin{figure}[!h]
\centering
\hfil%
\subfigure[Insignificant Overlap/Insignificant Density\label{fig:inSigOver-inSigDen}]{\includegraphics[width=.5\columnwidth]{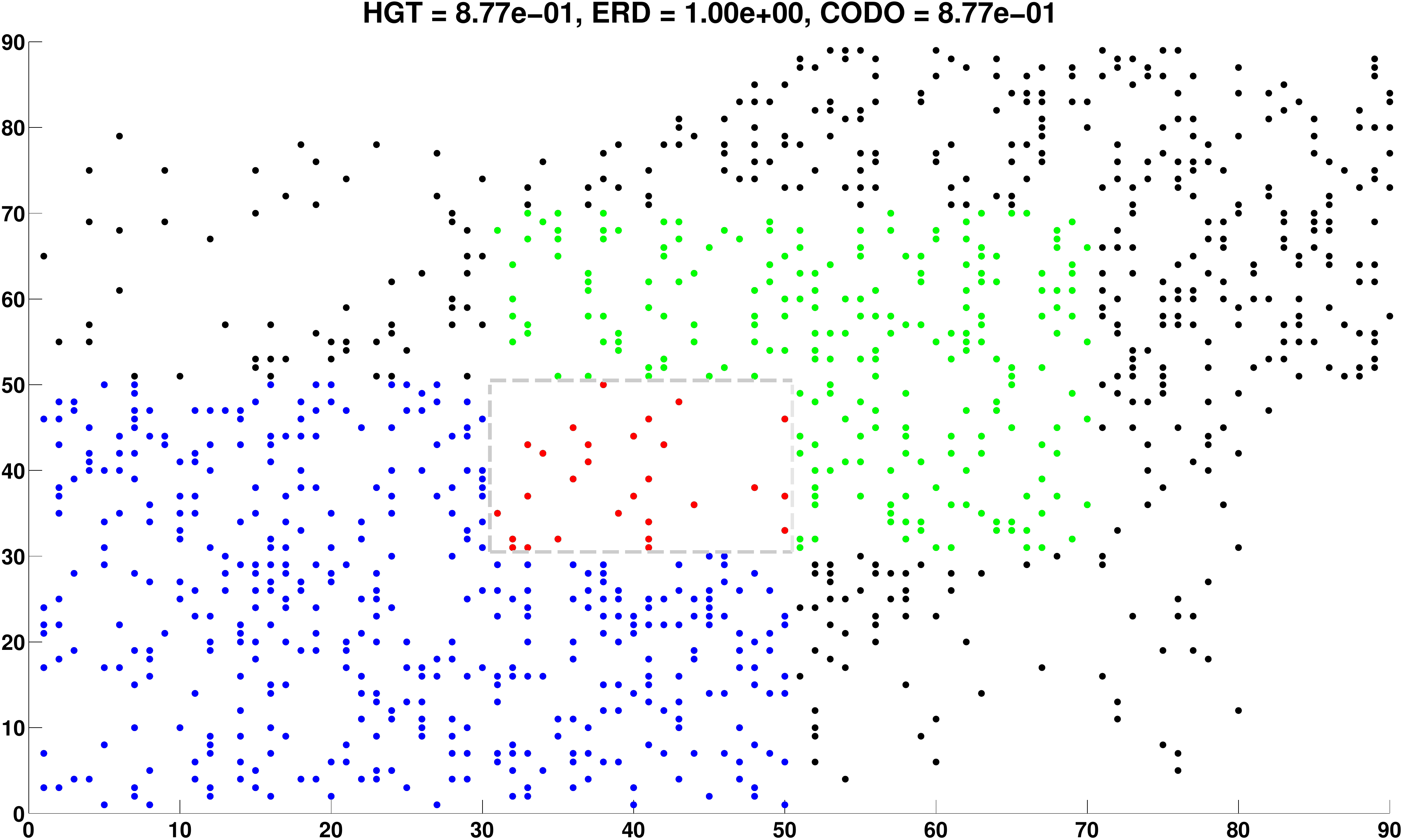}}%
\hfil%
\subfigure[Insignificant Overlap/Significant Density\label{fig:inSigOver-SigDen}]{\includegraphics[width=.5\columnwidth]{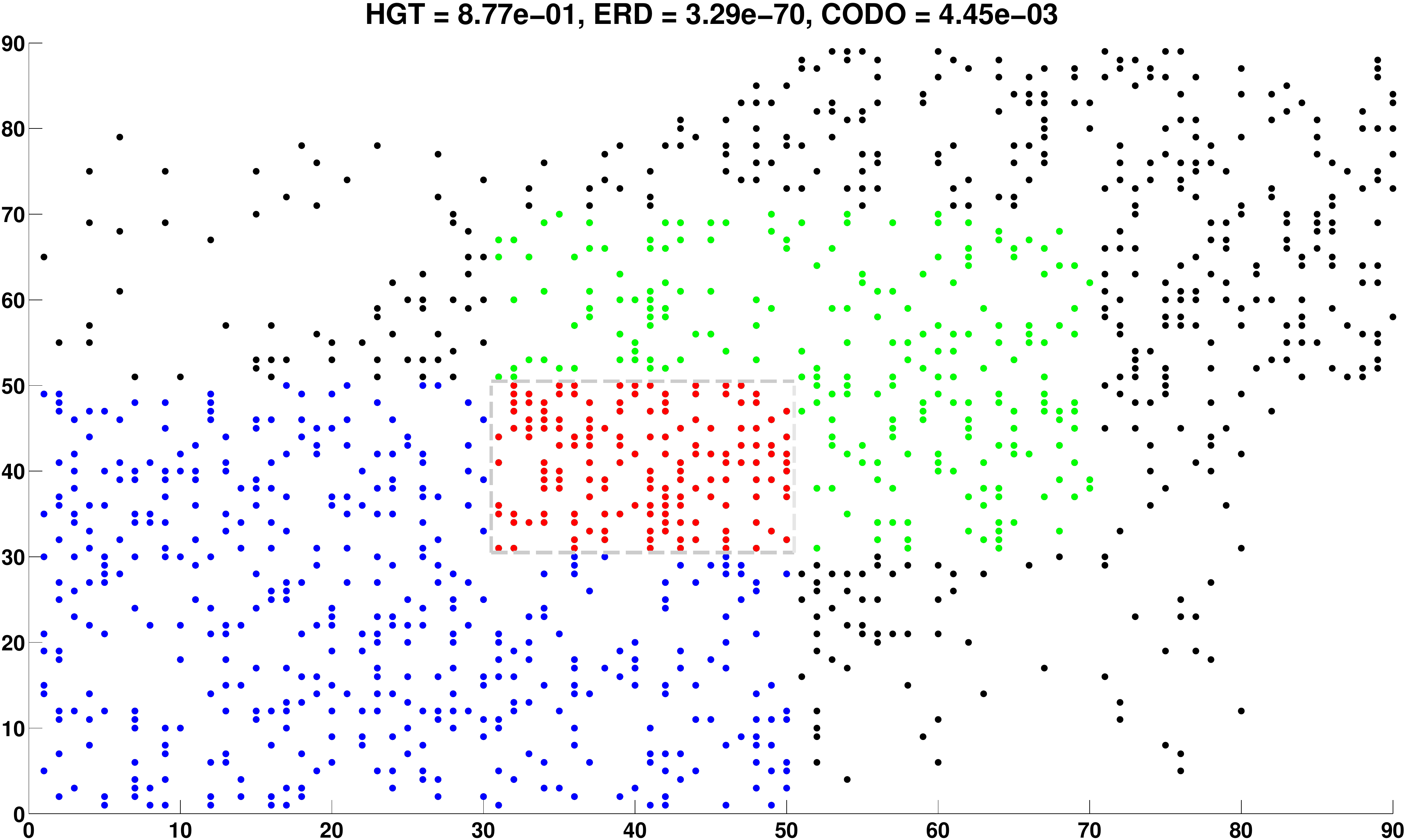}}%
\hfil%
\subfigure[Significant Overlap/Insignificant Density\label{fig:SigOver-inSigDen}]{\includegraphics[width=.5\columnwidth]{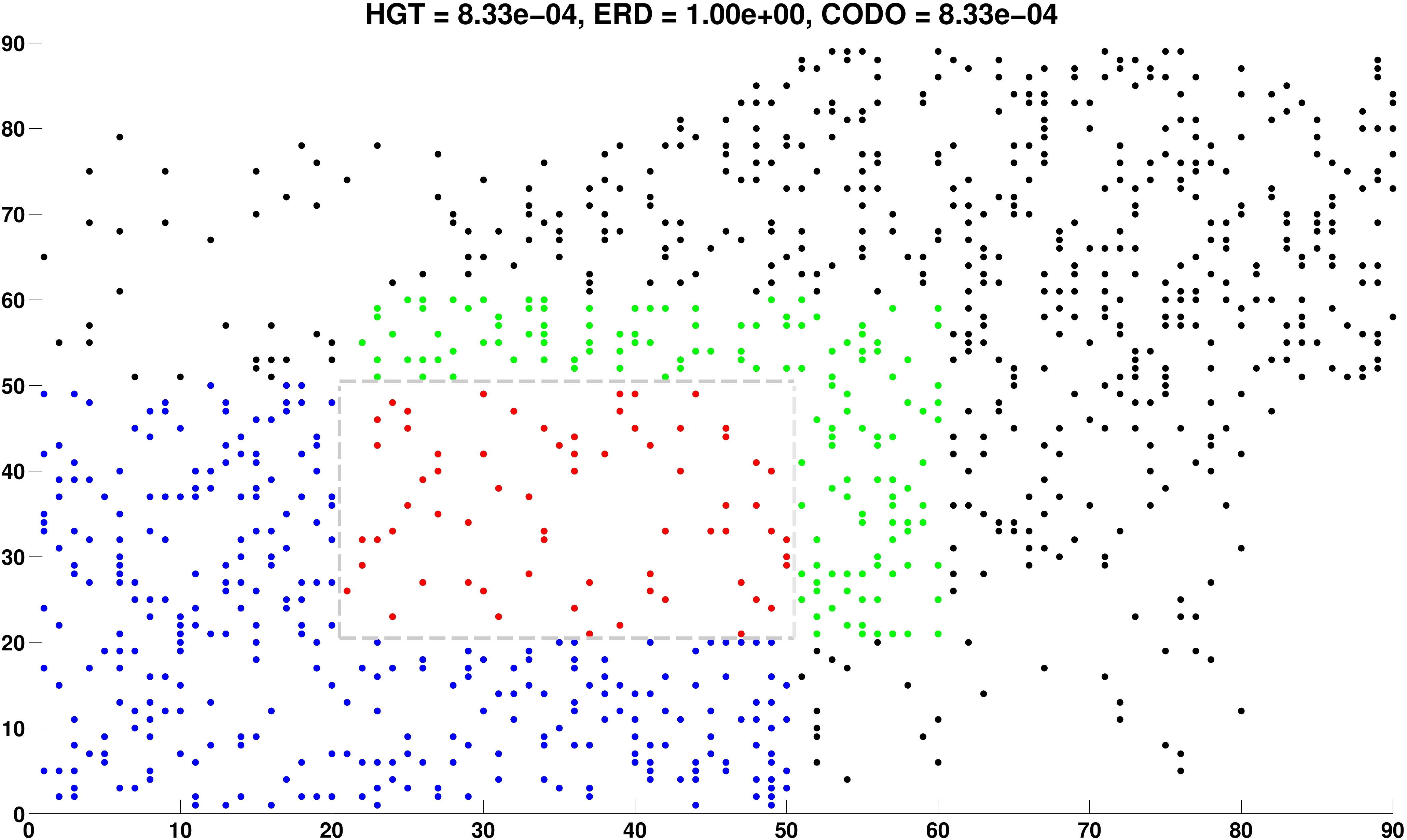}}%
\hfil%
\subfigure[Significant Overlap/Significant Density\label{fig:SigOver-SigDen}]{\includegraphics[width=.5\columnwidth]{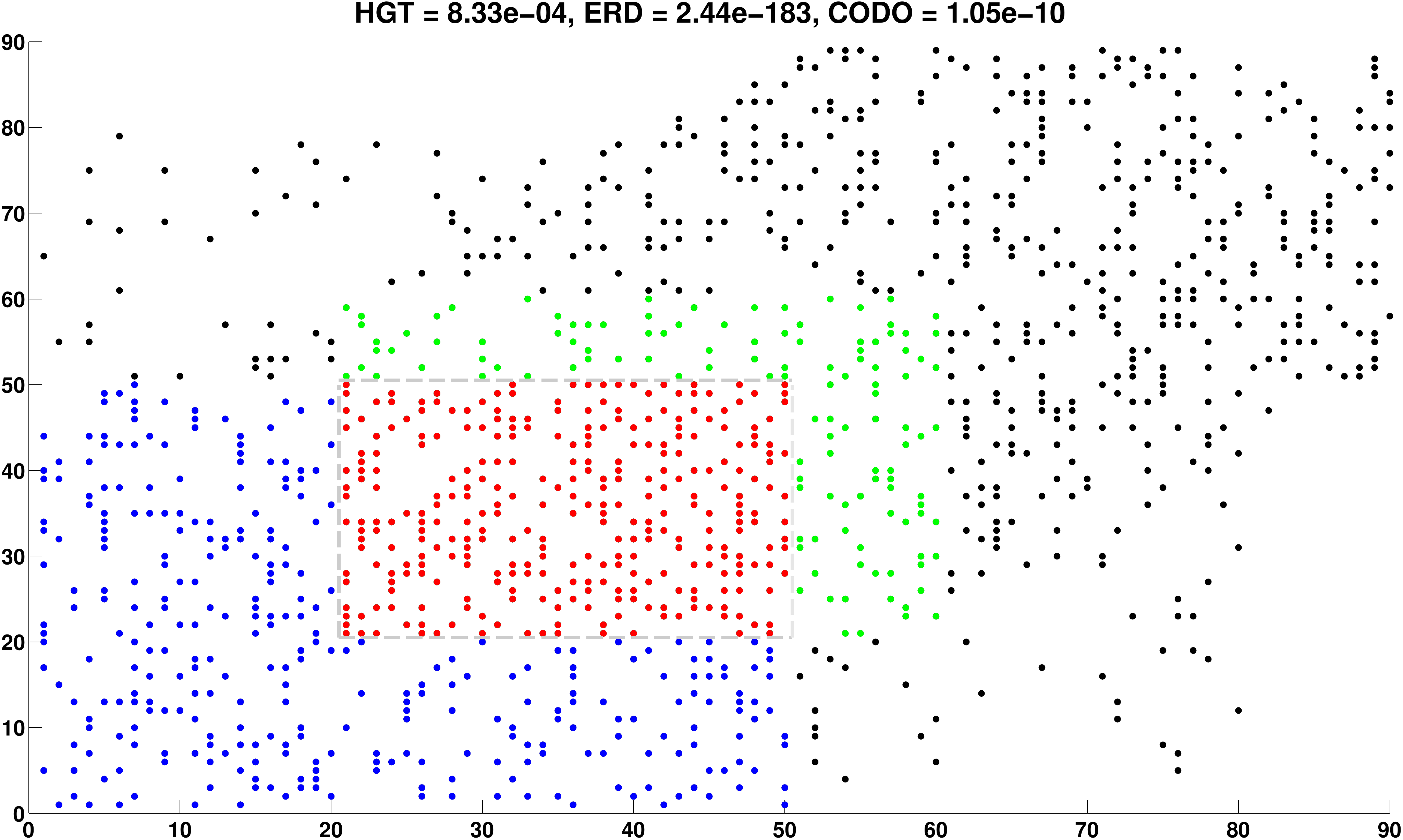}}%
\caption{Synthetic graphs with varying levels of density/overlap. Illustrated here is the adjacency matrix of generated graphs. Black, green, and blue dots (edges) belong to the background, the first, and the second implanted subgraphs, respectively, whereas red dots represent edges in the overlap.}
\label{fig:results_synthetic}
\end{figure}

When neither the overlap, nor its density are significant, as in Figure~\ref{fig:inSigOver-inSigDen},
the \emph{p}-values from all methods are close to 1 (as expected). When the overlap size
is significant but density of overlap is not (Figure~\ref{fig:SigOver-inSigDen}), \textit{HGT}
and \textit{CoDO} yield identical significant results, while \textit{ERD} does not identify the
overlap as significant (\emph{p}-value approaching 1). On the other hand, when size of overlap
is not significant but its density is significant, as in Figure~\ref{fig:inSigOver-SigDen},
\textit{HGT} \emph{p}-value is close to one, whereas both \textit{ERD} and \textit{CoDO}
identify the overlap as significant, the difference being that \textit{ERD} \emph{p}-value
abruptly changes from $1$ to $3.29 \times 10^{-70}$, but \textit{CoDO} smoothly transitions
from $8.77 \times 10^{-1}$ to $4.45 \times 10^{-3}$ demonstrating much better discriminating
power. Finally, when both the size of the overlap and its density are significant,
\textit{CoDO} \emph{p}-value is more significant than in both cases in
Figures~\ref{fig:inSigOver-SigDen} and Figure~\ref{fig:SigOver-inSigDen}, and it is more
significant that \textit{HGT} \emph{p}-value alone.

\subsection{Significance of Overlap Among Social Circles}
Users in social networks are connected not only to their close friends, but also family members, schoolmates, and colleagues, among others. Organizing friends into communities, or \textit{social circles}, is one of the common approaches to organizing, predicting, and recommending contacts. A user's circles are typically \textit{overlapping}, and a user can belong to one, many, or none of the circles. Given a set of overlapping circles, we are interested in assessing whether there is a significant interdependency among the circles, or that the observed overlap is simply due to the existence of \textit{bridge nodes} or \textit{party alters}.

To evaluate our method, we use three sets of social networks derived from Facebook, Google+, and Twitter\cite{NIPS2012}. Each of these networks is centered around a focal user, or \textit{ego}, along with all other users it directly connects to. This collection of neighbor nodes is also known as the set of \textit{alters}. The induced subgraph of friendships among \textit{alter nodes}, is known as an \textit{ego net}. The Facebook dataset is \textit{fully observed}, in the sense that each user was asked to manually identify all coherent groups among their friends. We refer to these coherent groups of friends as \textit{circles}. In contrast to the Facebook dataset, circles in Google+ and Twitter are restricted to publicly visible and explicitly designated circles for each ego net.

To establish a ground truth for interdependency among pairs of circles, we use common features of users to assign a \emph{p}-value to their feature overlap. For a given ego net $G=(V, E)$ together with a $K$-dimensional feature vector for each node, we compute the total number of feature pairs between any pair of alters as $\pi = K{|V| \choose 2}$. Moreover, we can compute the total number of features common to any two alters as:

\begin{equation}
\xi = \sum_{\substack{i, j \in \Set{V}\\ i < j}} \sum_{f \in \mathcal{F}} \Ind_{f}(i, j)
\end{equation}

where $\Set{V}$ is the set of alter vertices, $\mathcal{F}$ is the set of features, and $\Ind_{f}(i, j)$ is an indicator function that is one, if alter $i$ and alter $j$ share feature $f$, and zero, otherwise. Given an overlap set of $O$, we can similarly define $\pi_O = K{|O| \choose 2}$ and:

\begin{equation}
\xi_O = \sum_{\substack{i, j \in \Set{O}\\ i < j}} \sum_{f \in \mathcal{F}} \Ind_{f}(i, j)
\end{equation} 

Using this notion, we can define the significance of feature overlap among two circles as:

\begin{equation}
p-val(O) = \sum_{x = \xi_O}^{\textbf{min}(\xi, \pi_O)} \frac{{\xi \choose x}{\pi-\xi \choose \pi_O - x}}{{\pi \choose \pi_O}}
\end{equation}

which is simply the tail of the Hypergeometric distribution, measuring the probability of observing $\xi_O$ or more common features in a random set of size $O$ (with $\pi_O$ feature pairs), if we were sampling at random without replacement from all possible feature pairs. For each dataset, we compute the significance of all circle pairs for every ego net and correct for multiple hypothesis testing using \textit{Bonferroni} method. Finally, we define a pair of circles as \textit{significantly correlated} if \emph{p}-value of their overlap is smaller than or equal to the threshold $0.05$. Table~\ref{table:ego_stats} summarizes the statistics of these networks. Here, we remove all ego nets with less than two circles or those without at least one common feature between alters. Among the three datasets, Facebook had the lowest percentage of significant overlaps ($< 25\%$) compared to the other two datasets ($35\%$ and $37\%$). This may be attributed to method used for identifying circles in the Facebook dataset.

\begin{table}[t]
  \caption{Statistics of ego nets}
  \label{table:ego_stats}
  \centering
  \begin{tabular}{lccccc}
    \toprule
	Dataset	&\# Nets	&Avg \# Users	&Avg \# Circles	&\# Over. Circles	&\# Signif. Overlaps\\
    \midrule
	Facebook	&10	&416.7	&19.3	&415	&92\\
	Google+	&124	&1,945.7	&3.5	&1,170	&437\\
	Twitter	&834	&137.6	&4.2	&14,610	&5,055\\
    \bottomrule
  \end{tabular}
\end{table}

Next, to evaluate \textit{HGT, ERD}, and \textit{CoDO} methods, we assess all pairs of overlapping circles and sort them based on their significance. For pairs with similar \emph{p}-value, we randomly order them. Using the gold-standard set of \textit{significantly correlated} circles, computed using feature vector overlaps, we compute the \textit{receiver operating characteristic (ROC)} for each dataset individually. These are presented in Figure~\ref{fig:results_ego}. Each method is annotated with its area under the curve (AUC) to simplify comparison. In all three datasets, \textit{CoDO} outperforms the other two methods. It is worth noting that \textit{CoDO} is designed to capture signals from both network density and the overlap size. To illustrate this point, we emphasize the behavior of \textit{CoDO} over the Facebook dataset as an example. For small FPR values, density has higher signal than overlap (observed as the superior performance of ERD compared to HGT). In this regime, \textit{CoDO} captures this signal and \textit{outperforms} of \textit{ERD}. When \textit{ERD} signal plateaus ($0.1 < FPR$), \textit{CoDO} leverages network structure to outperform (\textit{HGT}).

\begin{figure}[!h]
\centering
\hfil%
\subfigure[Facebook\label{fig:FB}]{\includegraphics[width=.45\columnwidth]{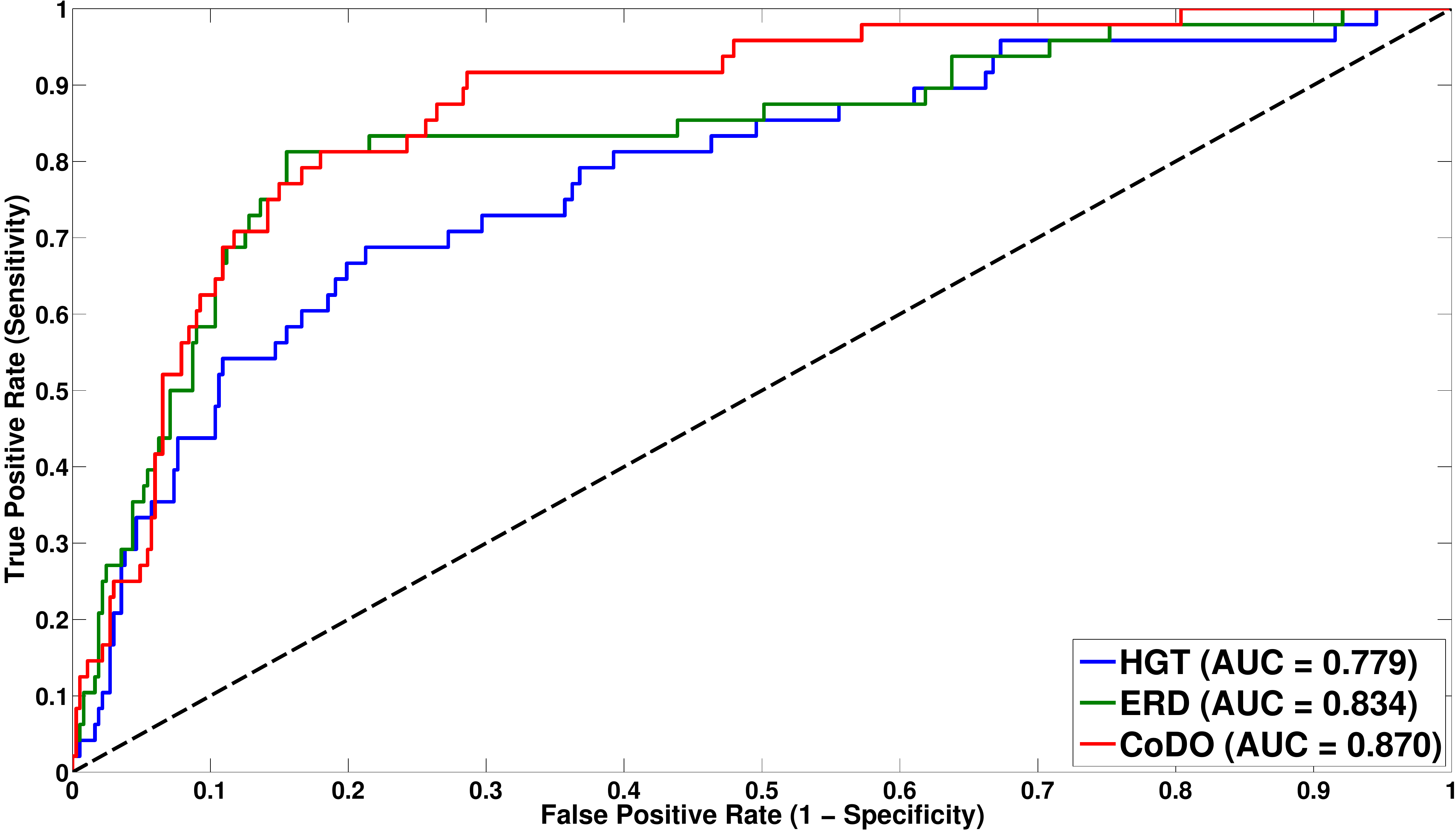}}%
\hfil%
\subfigure[Google+\label{fig:GPlus}]{\includegraphics[width=.45\columnwidth]{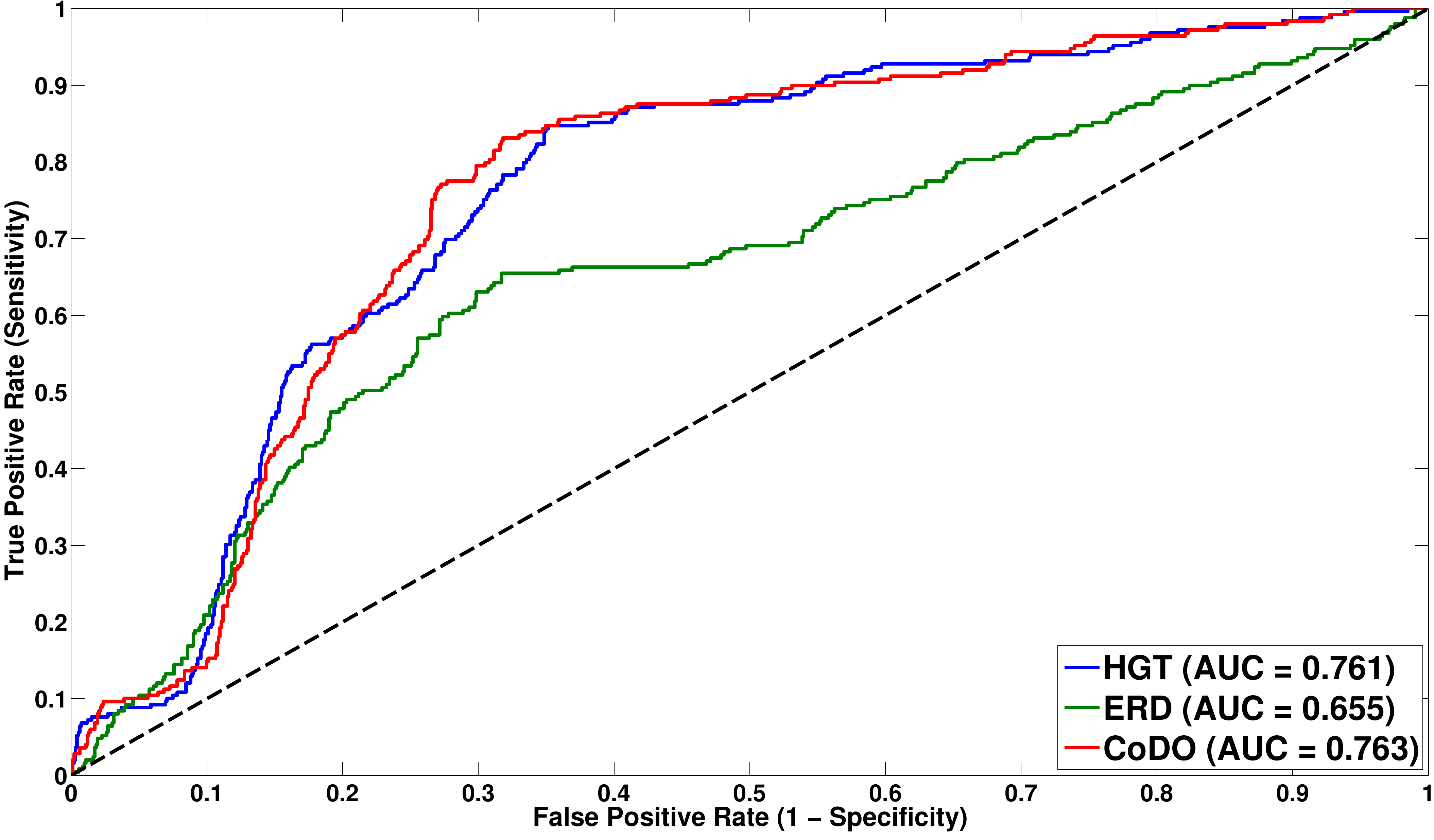}}%
\hfil%
\subfigure[Twitter\label{fig:Twitter}]{\includegraphics[width=.45\columnwidth]{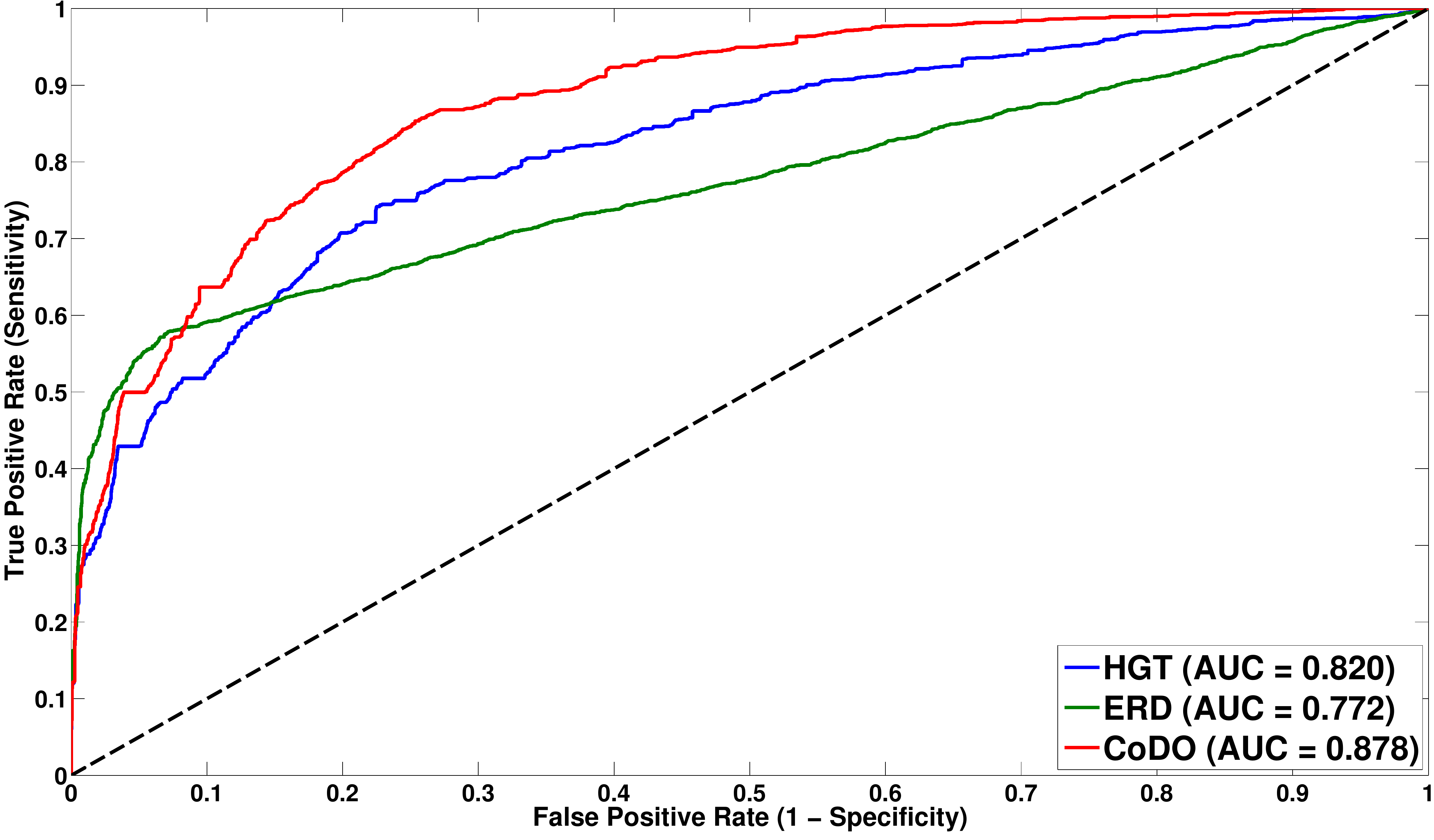}}%
\caption{Overlapping circles in social networks predict common features}
\label{fig:results_ego}
\end{figure}

\subsection{Crosstalk Among Biological Pathways}
\label{sec:KEGG_overlapPvals}

Biological pathways are higher order constructs that perform key cellular functions. These pathways are known
to work in concert and their cross-talk regulates the systems level behavior of
cells. To evaluate the interaction between different pathways and their
shared mechanisms, we seek to construct a pathway-pathway interaction map that represents
the extent of overlap among different pathways. 

We construct a comprehensive human interactome from a recently published dataset by Han \textit{et al.}\cite{ESEA}, resulting in a network of 164,826 interactions among 8,872 proteins. We downloaded the set of KEGG pathways from MSigDB\cite{MSigDB} and filtered pathways that have less than 10 corresponding vertices in the human interactome. The final dataset consists of 186 pathways with an average of $\sim 66$ vertices per pathway. To evaluate the computed \emph{p}-values, we use \textit{co-transcriptional activity of pathway pairs} as a proxy for their functional relatedness. We downloaded the and processed tissue-specific RNASeq dataset from the Genotype-Tissue Expression (GTEx) project \cite{GTEx}, which contains 2,916 samples from 30 different tissues/ cell types.  We summarize tissue-specific expression of each pathway by averaging the expression of all its member genes in each tissue. Then, we define co-transcriptional activity of pathways by computing the Pearson's correlation of these tissue-specific pathway expression signatures.

To evaluate different methods, we first compute the nonparametric Spearman's correlation
between the co-transcriptional activity scores and computed \emph{p}-values in each
method. This yields $0.28, 0.16,$ and $0.41$ correlation scores for \textit{HGT, ERD,}
and \textit{CoDO} methods, respectively, which shows that \textit{CoDO} significantly outperforms
the other two methods with respect to the co-transcriptional activity of significant pathway pairs.
Next, to put the computed \emph{p}-values by \textit{CoDO} in context, we construct a pathway-pathway overlap network by thresholding pairwise overlaps at a stringent cutoff (\emph{p}-value
$\leq 10^{-30}$) and use all pairs with significant \emph{p}-values as edges in the network.
This results in a network of 129 nodes (representing pathways) and 877 interactions
(inferred as the significance of the pathway overlaps) among them. This network is shown
in Figure~\ref{fig:results_pathway_overlap}. We use Cytoscape \cite{Cytoscape} to visualize
the graph and MCODE \cite{MCODE} to cluster the network. Each cluster is color-coded
independently, and four major clusters are manually annotated with the dominant function
in each group. We observe that the set of top-ranked pathway pairs cluster together to form coherent groups with highly coherent functions.
These significant overlaps reveal interesting functional connections, which can be used to understand pathology and identify novel drug targets.

\begin{figure}
\includegraphics[width=0.9\columnwidth]{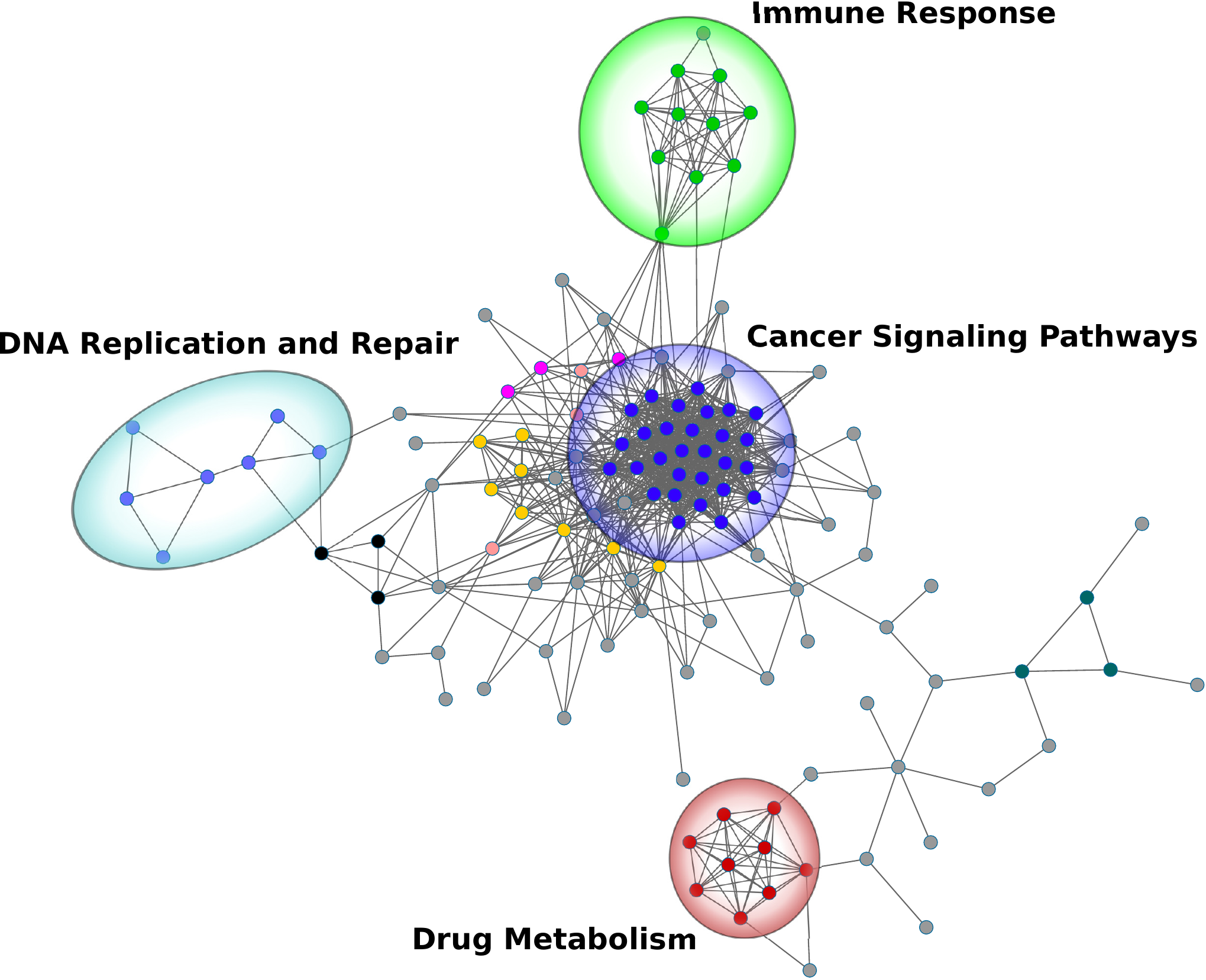}
\caption{Pathway-pathway overlap graph}
\label{fig:results_pathway_overlap}
\end{figure}

\section{Conclusions}

Assessing the statistical significance of observed overlaps between clusters in networks
is an important substrate of many graph analytics problems. In this paper, we present detailed
analysis and derivation of a $p$-value based measure of significance of cluster overlap. 
Unlike previous measures, our measure accounts for cluster sizes, densities, and density
of overlap -- all critical parameters of the problem. We show that our measure provides
excellent discrimination, smooth transition, and robustness to wide parameter ranges.
Using both synthetic and real datasets, we validate excellent performance of our analytical
formulation.

Our work opens a number of avenues for continued explorations. These include formulations
for alternate graph models, algorithms aimed at explicitly maximizing statistical significance
of overlaps, and application validation in other contexts.


\clearpage
\printbibliography



\end{document}